\author{M.\,Ziatdinov}
\title{Quantum hashing based on symmetric groups}
\date{}
\newcommand{\KK}{\mathcal{K}}
\newcommand{\ket}[1]{|#1\rangle}
\newcommand{\braket}[2]{\langle#1|#2\rangle}
\newcommand{\brz}[1]{\langle \psi_0 | #1 | \psi_0 \rangle}
\newcommand{\KG}{K_\mathrm{good}}
\newcommand{\Aut}{\mathrm{Aut}}
\newcommand{\HH}[1]{(\mathcal{H}^2)^{\otimes #1}}
\newcommand{\HtoH}[1]{[\HH{#1} \to \HH{#1}]}
\newcommand{\NC}{\mathrm{NC}}
\newtheorem{thm}{Theorem}
\newtheorem{lemma}{Lemma}
\newtheorem{defn}{Definition}
\begin{document}
\large 
\maketitle

\begin{abstract}
  The notion of quantum hashing formalized by  F. Ablayev and  A. Vasiliev in 2013. F. Ablayev and M. Ablayev in 2014 introduced the notion of quantum hash generator which is convenient technical tool for constructing quantum hash functions. M. Ziatdinov in 2014 presented group approach for constructing quantum hash functions.  All these mentioned above results present constructions of quantum hash functions based on abelian groups.

  This paper  continue the research on quantum hashing. Our approach allows us to construct quantum hash function working on any (finite) group. Also our approach allows us to construct quantum hash functions based on classical hash function from $\NC^1$.

  Keywords: quantum hashing, quantum hashing on groups, symmetric groups
\end{abstract}

\section{Introduction}\label{sec:intro}

H. Buhrman et al.\ \cite{BCWdW01} introduce the notion of quantum fingerprinting.  Quantum fingerprinting  based on  binary error correcting codes. Later F. Ablayev and A. Vasiliev in~\cite{AV09} offer another (non binary) version of quantum fingerprinting. 
 F. Ablayev and A. Vasiliev \cite{AV14} defined notion of quantum hash-function and showed that quantum fingerprinting is a specific case of quantum hashing.

In~\cite{aa14} construction of Buhrman et al.\cite{BCWdW01} and Ablayev-Vasiliev's construction \cite{AV14} are generalized. It is shown that both approaches can be viewed as composition of so called ``quantum generator'' and (classical) universal hash function. 

 In \cite{Z2} we  offered  a group approach to fingerprinting. We showed  that instead of abelian group $\mathbb Z_m$ with   $m>0$ \cite{AV14} we can use arbitraey abelian group. These constructions use specific so called  ``good'' set of automorphisms. However, examples of such ``good'' sets (and, hence the quantum hash functions) were found only for abelian groups.

 In this paper we offer ``good'' set of automorphisms for symmetric group, and construct quantum hash function based on any finite group. This approach allows us to construct quantum hash functions based on classical functions from $\NC^1$. We also discuss the procedure of finding ``good'' set of automorphisms.

\section{Previous work}\label{sec:prior}

We start with recalling basic definitions that we will need in paper.

We will consider functions $h : \{0,1\}^n \to G$, where $G$ is a group.

Let us choose a set of automorphisms $\KK$ from group of all automorphisms $\Aut(G)$:
\begin{equation}\label{eqn:auto-set}
k_i \in \KK \subseteq \Aut(G), \qquad 1 \le i \le T, |\KK| = T
\end{equation}

We will use notation $k\{g\}$ for image of $g$ under automorphism $k$.

Let us also choose a homomorphism $f$ from group $G$ to a group of all unitary transformations of $m$ qubits.

\medskip

Let us recall definitions and theorems from \cite{AV14} and \cite{Z2}

Quantum hash function is defined as follows.
\begin{defn}
  $\ket{\Psi(w)}$ is a quantum hash function if it maps $n$--bit message $w$ from $\{0,1\}^n$ to $m$ qubits and resulting vectors are nearly orthogonal: $\forall w, w' \in \{0,1\}^n ( |\braket{\Psi(w)}{\Psi(w')}| < \epsilon )$ for some $\epsilon \in (0,1)$.
\end{defn}

We call set $\KG$ of elements of chosen $\KK$ ``good'' set if
for each non-unit group element $g$ and some starting state $\ket{\psi_0}$:
\begin{equation}\label{eq:good-set}
\forall g \in G, g \neq e : \frac{1}{|\KG|^2} \left| \sum_{k \in \KG} \brz{f(k\{g\})} \right|^2 < \epsilon
\end{equation}

In \cite{Z2} it was proved that
\begin{thm}\label{thm:good-set}
  If (\ref{eq:sum-good}) holds, then ``good'' set exists and can be constructed by choosing $d$ times element from $\KK$ at random, and $d = \frac{2}{\epsilon} \ln |G|$
  \begin{equation}\label{eq:sum-good}
    \forall g \in G, g\neq e : \frac{1}{|\KK|} \sum_{k \in \KK} \brz{f(k\{g\})} = 0,
  \end{equation}
\end{thm}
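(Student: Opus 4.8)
The plan is to prove (\ref{eq:good-set}) by the probabilistic method: I will show that drawing $d$ automorphisms $k_1,\dots,k_d$ independently and uniformly at random from $\KK$ yields a set $\KG = \{k_1,\dots,k_d\}$ that violates the ``good'' condition for at least one $g$ with probability strictly less than $1$; since the bad event then fails to exhaust the sample space, at least one admissible $\KG$ must exist, and the randomized sampling procedure produces one with positive probability. First I would fix a non-unit element $g \in G$ and introduce the bounded random variables $X_j = \brz{f(k_j\{g\})}$. Each $f(k_j\{g\})$ is unitary and $\ket{\psi_0}$ is a unit vector, so Cauchy--Schwarz gives $|X_j| \le 1$; and because the $k_j$ are uniform on $\KK$, hypothesis (\ref{eq:sum-good}) states precisely that $\Mean{k}{\brz{f(k\{g\})}} = 0$, i.e.\ every $X_j$ has mean zero. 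In this notation the requirement (\ref{eq:good-set}) for this fixed $g$ becomes $\left| \frac{1}{d}\sum_{j=1}^{d} X_j \right|^2 < \epsilon$: the empirical mean of the $X_j$ must stay within $\sqrt{\epsilon}$ of its true value $0$.

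The heart of the argument is an exponential concentration (Chernoff/Hoeffding) bound for this empirical mean. In the constructions of interest the inner products $X_j$ are real and lie in $[-1,1]$, so Hoeffding's inequality applied to the $d$ independent mean-zero terms gives $\Prob\left[ \left| \frac{1}{d}\sum_{j=1}^{d} X_j \right| \ge \sqrt{\epsilon} \right] \le 2\exp\!\left(-\tfrac{d\epsilon}{2}\right)$, which is exactly the failure probability of (\ref{eq:good-set}) for the single element $g$. (If the $X_j$ are genuinely complex I would split $X_j = A_j + iB_j$ into real and imaginary parts, bound each part by Hoeffding, and combine; this only changes the constants.)

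It remains to remove the dependence on $g$. A union bound over the fewer than $|G|$ non-unit elements shows that the probability that \emph{some} $g$ violates (\ref{eq:good-set}) is at most $|G| \cdot 2\exp\!\left(-\tfrac{d\epsilon}{2}\right)$. Taking $d = \frac{2}{\epsilon}\ln|G|$ makes the exponential factor cancel the $|G|$ up to a harmless constant, driving this quantity below $1$, so a ``good'' set exists and the random choice of $d$ elements succeeds with probability bounded away from zero.

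I expect the concentration step, rather than the union bound, to be the main obstacle. Hypothesis (\ref{eq:sum-good}) controls only the first moment of $X_j$, and a mere second-moment (Chebyshev) estimate yields tails decaying only like $1/(d\epsilon)$ --- far too slow to survive a union bound over roughly $|G|$ elements, since it would force $d$ to scale with $|G|$ instead of $\ln|G|$. Securing the genuine exponential tail of Hoeffding is therefore essential, and the delicate points are verifying that the sampled $X_j$ are truly independent and identically distributed, cleanly reducing the complex-valued case to two real applications, and tracking the constants tightly enough to land on the stated $d = \frac{2}{\epsilon}\ln|G|$.
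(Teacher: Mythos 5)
Your overall route --- fix a non-unit $g$, set $X_j = \brz{f(k_j\{g\})}$ for i.i.d.\ uniform draws $k_j$ from $\KK$, apply Hoeffding to these bounded mean-zero variables, then union-bound over the $|G|-1$ non-unit elements --- is exactly the intended argument. The paper itself gives no proof of this theorem (it defers to \cite{Z2}), but its remark at the end of Section~\ref{sec:symm}, that choosing $d = \frac{2}{\epsilon}\ln|G|$ elements at random yields a ``good'' set ``with probability of $\frac{1}{|G|}$'', shows that this Hoeffding-plus-union-bound computation is what is meant. Your preliminary reductions are all sound: $|X_j| \le 1$ by unitarity and Cauchy--Schwarz, $\mathbf{E}[X_j] = 0$ is precisely hypothesis (\ref{eq:sum-good}), independence follows from independent draws, and condition (\ref{eq:good-set}) for the fixed $g$ is the event $\left|\frac{1}{d}\sum_{j=1}^{d} X_j\right| < \sqrt{\epsilon}$. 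Your observation that Chebyshev would be too weak to survive the union bound is also correct.

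The gap is in your last step: the constants do not close, and calling them ``harmless'' hides a real failure. Two-sided Hoeffding for variables in $[-1,1]$ gives per-element failure probability $2\exp(-d\epsilon/2)$, which for $d = \frac{2}{\epsilon}\ln|G|$ equals $\frac{2}{|G|}$; the union bound then gives $2\,\frac{|G|-1}{|G|}$, which is at least $1$ for every $|G| \ge 2$, so you have not shown that the bad event has probability below $1$, and your sentence claiming so is false. To make existence rigorous you must either enlarge $d$ to $\frac{2}{\epsilon}\ln(2|G|)$ (slightly more than the theorem's stated value) or accept the accounting implicit in the paper, which treats the per-element failure as $\exp(-d\epsilon/2) = \frac{1}{|G|}$ and concludes only that the success probability is at least $\frac{1}{|G|}$ --- positive, hence existence, but not ``bounded away from zero,'' and off from honest two-sided constants by exactly that factor of $2$. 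The complex case is genuinely worse, not a constant-free afterthought: requiring both the real and the imaginary part of the empirical mean to be below $\sqrt{\epsilon/2}$ costs $4\exp(-d\epsilon/4)$ per element and forces $d$ of roughly $\frac{4}{\epsilon}\ln(4|G|)$, about twice the stated value; since the theorem pins $d$ exactly, you should either restrict to real-valued $\brz{f(k\{g\})}$ (as holds in the paper's symmetric-group application) or state the corrected $d$. A final small point: repeated draws make $\KG$ a multiset, so (\ref{eq:good-set}) should be read over the sequence of $d$ draws (as the hash function (\ref{eq:qhf-gr}) in fact does) rather than over the underlying set of distinct elements.
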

so, if (\ref{eq:sum-good}) holds, there exists quantum hash function for arbitrary small $\epsilon$ (however, ``good'' set size $d$ and therefore qubit count $m$ will grow)

We will say ``quantum hash function works for group $G$'' or simply ``quantum hash function for group $G$'' if it has form
\begin{equation}\label{eq:qhf-gr}
  \ket{\Psi_{h,G,K,f,m,\ket{\Psi_0}}(x)} = \frac{1}{\sqrt t} \sum_{j=0}^{t-1} \bigg( \ket{j} \otimes f\big( k_j\{h(x)\} \big) \ket{\psi_0} \bigg),
\end{equation}
where $h$ is classical hash function mapping $X^n$ to group $G$, $K = \{k_0, \ldots, k_{t-1}\}$  is ``good'' set of automorphisms and $f$ is homomorphism from $G$ to space $\HtoH{m}$.

It was also proven that
\begin{thm}\label{thm:qhf-exist}
  If for group $G$ ``good'' set of automorphisms exist, then quantum hash function for group $G$ exist.
\end{thm}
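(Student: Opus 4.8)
The plan is to show that the object defined in~(\ref{eq:qhf-gr}) actually satisfies the near-orthogonality requirement of the quantum hash function definition, given that a ``good'' set of automorphisms $K = \{k_0, \ldots, k_{t-1}\}$ is available. So the whole proof reduces to computing the inner product $\braket{\Psi(x)}{\Psi(x')}$ for two distinct messages $x, x'$ and bounding its magnitude by the desired $\epsilon$. I would first unfold the two states using the explicit form~(\ref{eq:qhf-gr}) and compute their overlap directly.

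First I would write
\begin{equation}\label{eq:overlap-expand}
  \braket{\Psi(x)}{\Psi(x')} = \frac{1}{t} \sum_{i,j=0}^{t-1} \braket{i}{j} \cdot \braaket{\psi_0}{f(k_i\{h(x)\})^\dagger f(k_j\{h(x')\})}{\psi_0}.
\end{equation}
The orthogonality of the index register $\braket{i}{j} = \delta_{ij}$ collapses the double sum to a single one over $j$, leaving
\begin{equation}\label{eq:overlap-diag}
  \braket{\Psi(x)}{\Psi(x')} = \frac{1}{t} \sum_{j=0}^{t-1} \braaket{\psi_0}{f(k_j\{h(x)\})^\dagger f(k_j\{h(x')\})}{\psi_0}.
\end{equation}
Next I would use that $f$ is a homomorphism into unitaries, so $f(k_j\{h(x)\})^\dagger = f((k_j\{h(x)\})^{-1})$, and that each $k_j$ is an automorphism, so $(k_j\{h(x)\})^{-1} k_j\{h(x')\} = k_j\{h(x)^{-1} h(x')\}$. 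Writing $g = h(x)^{-1} h(x') \in G$, this rewrites the summand as $\brz{f(k_j\{g\})}$.

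The key reduction is then that whenever $x \neq x'$ produce distinct hashes, $g \neq e$, and the overlap becomes exactly the quantity controlled by the ``good'' set condition~(\ref{eq:good-set}): $\left| \braket{\Psi(x)}{\Psi(x')} \right|^2 = \frac{1}{t^2} \left| \sum_{k \in K} \brz{f(k\{g\})} \right|^2 < \epsilon$. This is precisely the defining inequality of a good set with $|K| = t$, so the bound follows immediately, giving the desired near-orthogonality.

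The main obstacle I anticipate is the case $g = e$, i.e.\ when $h(x) = h(x')$ for distinct messages $x \neq x'$: here the good-set condition says nothing, and indeed the summand becomes $\brz{f(e)} = 1$, forcing $\braket{\Psi(x)}{\Psi(x')} = 1$. This means the argument only delivers a genuine quantum hash function when the classical preprocessing $h$ is injective (or at least collision-free on the relevant domain); I would therefore either assume $h$ is injective as part of the hypothesis or restrict the orthogonality claim to pairs with distinct hashes, and flag this dependence on the properties of the classical hash $h$ explicitly. A secondary point to verify carefully is the automorphism identity $(k_j\{h(x)\})^{-1} k_j\{h(x')\} = k_j\{h(x)^{-1} h(x')\}$, which relies on each $k_j$ being a genuine group automorphism rather than an arbitrary map, but this is routine once the homomorphism property of $f$ and $k_j$ is in hand.
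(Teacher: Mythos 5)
Your proof is correct, and it is the argument this framework is built on: expand the overlap, collapse the double sum via $\braket{i}{j}=\delta_{ij}$, use that $f$ is a unitary representation and each $k_j$ is an automorphism to rewrite the summand as $\brz{f(k_j\{g\})}$ with $g=h(x)^{-1}h(x')$, then invoke the good-set bound (\ref{eq:good-set}). Note that the paper itself does not prove this theorem at all --- it imports it from \cite{Z2} --- so there is no in-text proof to compare against; your derivation is precisely the computation that the paper's proof of Theorem~\ref{thm:symm-qhf} implicitly relies on when it chains Theorems~\ref{thm:good-set} and~\ref{thm:qhf-exist} together. Your two flagged caveats are both genuine and worth keeping explicit. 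First, near-orthogonality can only hold for pairs with $h(x)\neq h(x')$: if $h$ collides on distinct messages, the two states coincide and the overlap is $1$, so either $h$ must be assumed injective (or collision-free on the domain of interest), or the theorem must be read as a statement about distinct group elements rather than distinct messages --- the paper glosses over this entirely. Second, and more minor, the good-set condition (\ref{eq:good-set}) bounds the \emph{squared} normalized sum by $\epsilon$, so what your computation yields is $|\braket{\Psi(x)}{\Psi(x')}| < \sqrt{\epsilon}$, which matches the paper's definition of a quantum hash function only after renaming the constant; this is an inconsistency in the paper's own conventions rather than a flaw in your argument.
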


\section{Quantum hash function working on symmetric group}\label{sec:symm}

\begin{thm}\label{thm:symm-qhf}
  There exists a quantum hash function $\ket{\Psi_{h,S_n,K,f,\log n}}$ working on symmetric group.

  Specifically, $f$ is standard symmetric group representation in a space of $n$ dimensions and $K$ is a set of all automorphisms acting by conjugation to cyclic shift.
\end{thm}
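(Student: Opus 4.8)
The plan is to reduce the claim to the general existence machinery already available: by Theorem~\ref{thm:good-set} and Theorem~\ref{thm:qhf-exist} it suffices to produce a representation $f$, a set of automorphisms $\KK$, and a starting state $\ket{\psi_0}$ of $S_n$ for which the averaging condition~(\ref{eq:sum-good}) holds; a ``good'' set of size $d=\frac{2}{\epsilon}\ln|S_n|=O\!\big(\frac{n\log n}{\epsilon}\big)$ then exists, and with it the quantum hash function~(\ref{eq:qhf-gr}). So the whole problem collapses to verifying~(\ref{eq:sum-good}) for the specific data in the statement: $f$ the $n$-dimensional permutation representation $f(\sigma)\ket{i}=\ket{\sigma(i)}$ on $\HH{\log n}\cong\mathbb{C}^{n}$, the cyclic shift $c=(1\,2\,\cdots\,n)$ with $C=f(c)$, and $\KK=\{k_0,\dots,k_{n-1}\}$ with $k_j\{\sigma\}=c^{j}\sigma c^{-j}$.

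First I would record the inner-product reduction. Using that $f$ is a homomorphism and each $k_j$ an automorphism, the overlap of two hash states collapses to a single register sum,
\[
\braket{\Psi(x)}{\Psi(x')}=\frac{1}{n}\sum_{j=0}^{n-1}\brz{f(k_j\{g\})},\qquad g=h(x)^{-1}h(x'),
\]
so that distinguishing messages is exactly controlling $\frac{1}{n}\sum_{j}\brz{C^{j}f(g)C^{-j}}$ for every $g\neq e$, which is the left-hand side of~(\ref{eq:sum-good}). The key computational step is then to diagonalise the cyclic shift in the Fourier basis $\ket{v_k}=\frac{1}{\sqrt n}\sum_{l}\omega^{kl}\ket{l}$, $\omega=e^{2\pi i/n}$. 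Averaging the conjugation over the cyclic group annihilates every Fourier-off-diagonal contribution and leaves only the diagonal part,
\[
\frac{1}{n}\sum_{j=0}^{n-1}C^{j}f(g)C^{-j}=\sum_{k}\braaket{v_k}{f(g)}{v_k}\,\ket{v_k}\bra{v_k}.
\]
Evaluating $\braaket{v_k}{f(g)}{v_k}=\frac{1}{n}\sum_{l}\omega^{k(l-g(l))}$ and inserting $\ket{\psi_0}$ turns the target quantity into $\frac{1}{n}\sum_{l}\phi\big(l-g(l)\big)$, where $\phi(m)=\sum_{k}|\braket{\psi_0}{v_k}|^{2}\,\omega^{km}$ and $\phi(0)=1$. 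I would then choose $\ket{\psi_0}$ so that this sum vanishes (or is below $\sqrt{\epsilon}$) for every $g\neq e$; for the basis state $\ket{\psi_0}=\ket{0}$ it simplifies cleanly to the fixed-point count $\mathrm{fix}(g)/n$.

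The hard part is precisely the permutations with many fixed points. A single transposition has $\mathrm{fix}(g)=n-2$, so for $\ket{\psi_0}=\ket{0}$ the cyclic average is $(n-2)/n\approx1$ rather than small, and more generally $\frac{1}{n}\sum_{l}\phi(l-g(l))$ is dominated by the terms with $g(l)=l$, each contributing $\phi(0)=1$. Overcoming this is where the real work lies, and I see two routes. One is to spread $|\braket{\psi_0}{v_k}|^{2}$ across the Fourier modes so that $\phi$ is concentrated near $m=0$ and the fixed-point mass is cancelled by the moving points; the other, matching the $\NC^1$ remark of Section~\ref{sec:intro}, is to compose with the classical hash $h$ so that every difference $g=h(x)^{-1}h(x')$ is forced to be a near-derangement, on which $\mathrm{fix}(g)/n$ is automatically small. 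Either way, once the averaging bound is established for all admissible $g$, Theorems~\ref{thm:good-set} and~\ref{thm:qhf-exist} finish the argument; the cyclic-group Fourier identity above is the routine ingredient, while the control of highly-fixed permutations is the genuine obstacle.
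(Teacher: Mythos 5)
Your proposal follows the same skeleton as the paper's proof --- reduce via Theorems~\ref{thm:good-set} and~\ref{thm:qhf-exist} to verifying condition~(\ref{eq:sum-good}) for the $n$-dimensional permutation representation and the set $\KK$ of conjugations by cyclic shifts --- but you stop short of verifying~(\ref{eq:sum-good}), and you are right to stop: for this data the condition is false, and the obstacle you name is exactly the point where the paper's own proof breaks. The paper takes $\ket{\psi_0}=c_1\ket{1}+\ldots+c_n\ket{n}$ with $\sum_i c_i=0$, reduces the average to $\sum_i c_i \sum_k c_{\sigma(i+k)-k}$, and then asserts (passing from~(\ref{eq:sum-1}) to~(\ref{eq:sum-2})) that for fixed $i$ the index $\sigma(i+k)-k$ ``runs over all integers from $1$ to $n$'' as the shift amount $k$ varies, so that the inner sum equals $\sum_j c_j = 0$. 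Substituting $j=i+k$, this asserts that $j \mapsto \sigma(j)-j+i$ is a bijection modulo $n$, i.e.\ that $\sigma$ is a so-called ``complete mapping''. That is false for most $\sigma \neq e$: for a transposition $\sigma=(a\,b)$ the difference $\sigma(j)-j$ is $0$ for $n-2$ values of $j$, and for $\sigma$ a nontrivial power of the cyclic shift one even has $\tau\sigma\tau^{-1}=\sigma$ for every $\tau$, so the ``average'' is the single overlap $\brz{f(\sigma)}$ and nothing cancels.

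Your Fourier computation is the right way to make this quantitative, and it shows the failure cannot be repaired by a cleverer starting state: the averaged overlap equals $\frac{1}{n}\sum_l \phi(l-g(l))$ with $\phi(0)=1$, $|\phi(m)|\le 1$ and $\phi(-m)=\overline{\phi(m)}$, so for a transposition it is real and at least $\frac{n-4}{n}$, close to $1$ rather than below any fixed $\epsilon$ --- and this bound holds for \emph{every} $\ket{\psi_0}$, in particular for the paper's states, which only impose that the weight on the zeroth Fourier mode vanishes. The group of $n$ cyclic conjugations is simply too small to mix permutations with many fixed points. So the honest conclusion from your work is not that you are missing a trick, but that the theorem is not established by the paper's argument and cannot be established along these lines; one needs a larger set of automorphisms, a different representation, or (as you suggest) a restriction guaranteeing that every quotient $h(x)^{-1}h(x')$ is a near-derangement. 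Where your proposal and the paper diverge, it is the paper that is in error.
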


\begin{proof}
  Theorems \ref{thm:good-set} and \ref{thm:qhf-exist} state that if there exists a homomorphism $f$, a set $\KK$ of automorphisms of $G$ such that 
  \begin{equation}\label{eq:good-qhf}
    \frac{1}{|\KK|} \sum_{k \in \KK} \brz{f(k\{g\})} = 0,
  \end{equation}
  then $\Psi_{h,G,K,f,m}$ is a quantum hash function.

  In our case, $f$ is a standard symmetric group representation in a space of $n$ dimensions with group $S_n$ acting by coordinates permutation.

  Let $\KK$ be the set of all (inner) automorphisms that has form:
  \begin{equation}\label{eq:def-kk}
    \KK = \{ g_\sigma : \sigma \text{ is a cyclic shift} \}, \quad g_\sigma(\tau) = \sigma \tau \sigma^{-1}\}
  \end{equation}

  Let $\ket{\psi_0}$ be some vector $c_1\ket{1} + c_2\ket{2} + \ldots + c_n\ket{n}$, such that:
  \begin{equation}\label{eq:psi0}
    \sum_{i=1}^n c_i = 0
  \end{equation}

  Image of $\ket{\psi_0}$ under $f(g_\tau\{\sigma\})$ for any $\sigma$ and $\tau \in \KK$ is
  \begin{equation}\label{eq:psi0-fg}
    f(g_\tau\{\sigma\}) = c_{\sigma(1+k)-k}\ket{1} + \ldots + c_{\sigma(n+k)-k}\ket{n},
  \end{equation}
  where $\tau$ is cyclic shift to $k$ and addition and subtraction in indices are modulo $n$.

  So, if we sum this for all automorphisms $\tau \in \KK$ we get:
  \begin{equation}\label{eq:sum-1}
    \sum_{g_\tau\in\KK}\brz{f(g_\tau\{\sigma\})} = \sum_{k=0}^n \sum_{i=0}^n c_i c_{\sigma(i+k)-k} = \sum_{i=0}^n c_i \sum_{k=0}^n c_{\sigma(i+k)-k}.
  \end{equation}
  We substituted $f(g_\tau\{\sigma\})\ket{\psi_0}$ with its value from \ref{eq:psi0-fg}.

  We can observe that $\sigma(i+k)-k$ runs over all integers from $1$ to $n$. So we can rewrite as follows:
  \begin{equation}\label{eq:sum-2}
    \sum_{g_\tau\in\KK}\brz{f(g_\tau\{\sigma\})} = \sum_{i=0}^n c_i \sum_{j=0}^n c_j = 0.
  \end{equation}
  We use equation (\ref{eq:sum-1}) and definition (\ref{eq:psi0}) of $\psi_0$.

  Equation (\ref{eq:sum-2}) is equivalent to (\ref{eq:good-qhf}), so theorems \ref{thm:good-set} and \ref{thm:qhf-exist} can be applied, and quantum hash function for $S_n$ exist.
\end{proof}

Please note that this proof does not apply to $A_5$ representation from paper \cite{AMP02} and we cannot use their representation and approach of this article to define quantum hash functions based on $NC^1$ functions. In the section \ref{sec:applications} we use another representation.

\bigskip

In \cite{Z2} it was shown that if we find a set $\KK$ satisfying equation (\ref{eq:good-set}), we can construct a ``good'' set with probability of $\frac{1}{|G|}$ by repeatedly ($d = \frac{2}{\epsilon} \ln |G|$ times) randomly choosing an element from $\KK$.

\section{Applications}\label{sec:applications}

We can use defined quantum hash function working on symmetric group to construct other quantum hash functions. One way of such construction is defined in \cite{Z2}: we construct a hash function working on (direct) product of groups. We present another way.

\begin{lemma}
  Let $G$ be a finite group, group $G' \lhd G$ be its subgroup, and $\ket{\Psi_{h,G,K,f,m}}$ be a quantum hash function working on it.

  Then we can define a quantum hash function working on $G'$.
\end{lemma}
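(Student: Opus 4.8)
The plan is to build the quantum hash function for $G'$ by \emph{reusing} the data $(f, K, \ket{\psi_0})$ of the given quantum hash function for $G$, exploiting the single fact that $G' \setminus \{e\} \subseteq G \setminus \{e\}$. Since the defining inequality (\ref{eq:good-set}) of a ``good'' set is universally quantified over every non-unit element of the ambient group, any bound that holds for all $g \in G \setminus \{e\}$ holds in particular for all $g \in G' \setminus \{e\}$. So I expect the whole lemma to follow once the objects $f$ and $K$ are suitably restricted to $G'$ and the inequality is shown to be inherited verbatim; the conclusion then comes from Theorem \ref{thm:qhf-exist}.

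Concretely, first I would fix a classical hash $h' : X^n \to G'$ (for instance the composition of any injective encoding with the inclusion $G' \hookrightarrow G$) so that distinct messages give distinct group elements. Next I would restrict the homomorphism $f$ to $G'$: the restriction $f|_{G'} : G' \to \HtoH{m}$ of a homomorphism to a subgroup is again a homomorphism, so this step is free. The substantive step is to restrict the automorphisms: for each $k_j \in K$ I must check that $k_j(G') = G'$, so that $k_j|_{G'}$ is a genuine element of $\Aut(G')$, and then set $K' = \{\, k_j|_{G'} \,\}$ keeping the same index set. This is exactly where the hypothesis $G' \lhd G$ enters — for an inner automorphism $k_j(\tau) = \sigma \tau \sigma^{-1}$ normality gives $k_j(G') = \sigma G' \sigma^{-1} = G'$, and injectivity on the finite set $G'$ upgrades $k_j|_{G'}$ to a bijection, hence to an automorphism of $G'$.

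Then I would verify the good-set inequality for $(G', K', f|_{G'}, \ket{\psi_0})$. For any $g \in G' \setminus \{e\}$ one has $k_j|_{G'}\{g\} = k_j\{g\}$ and $f|_{G'}(k_j\{g\}) = f(k_j\{g\})$, so every summand of $\frac{1}{|K'|}\sum_{k \in K'}\brz{f|_{G'}(k\{g\})}$ coincides with the corresponding summand for $G$; thus the two sums agree term by term and $K'$ inherits the bound (\ref{eq:good-set}). Hence $K'$ is a ``good'' set for $G'$, and Theorem \ref{thm:qhf-exist} yields a quantum hash function of the form (\ref{eq:qhf-gr}) working on $G'$, with the same qubit count $m$ and $|K'| = |K|$ branches, so no resource grows.

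The main obstacle is the automorphism-restriction step, and normality is precisely the condition that removes it for inner automorphisms. I would also record the observation that if $K$ happened to contain automorphisms of $G$ not preserving $G'$, one can sidestep the issue entirely by \emph{not} restricting $f$ and $K$ at all: keeping them defined on $G$ and only forcing the classical hash to take values in $G'$, the state (\ref{eq:qhf-gr}) is still well defined and near-orthogonal, because the relevant group element $g = h'(x)^{-1} h'(x')$ lies in $G' \setminus \{e\} \subseteq G \setminus \{e\}$ whenever $h'(x) \neq h'(x')$. The only remaining point to be careful about is that $h'$ separates messages, which any injective $h'$ guarantees.
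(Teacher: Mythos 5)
Your proposal is correct, and its core is the same observation the paper's proof relies on: since $G' \setminus \{e\} \subseteq G \setminus \{e\}$, the bound (\ref{eq:good-set}), being universally quantified over all non-identity elements of $G$, is inherited verbatim by $G'$, so the same data $(K, f, \ket{\psi_0})$ produce near-orthogonal states on $G'$. The difference is in packaging and in rigor. The paper keeps $K$ and $f$ untouched, takes $h'$ to be $h$ ``restricted to $G'$'', and observes in one line that the inner products coincide with those of the hash function on $G$, hence are below $\epsilon$; normality is invoked in a single sentence without saying where it is actually needed. Your primary route instead restricts $f$ and each $k_j$ to $G'$ and feeds the restricted data into Theorem \ref{thm:qhf-exist}; this is where you correctly locate the real role of $G' \lhd G$ --- it guarantees $k_j(G') = G'$ for \emph{inner} automorphisms --- and you also correctly note that for non-inner $k_j$ normality is insufficient (one would need $G'$ to be characteristic), supplying as a fallback the option of not restricting $K$ and $f$ at all, which is precisely the paper's argument. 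So your write-up both subsumes the paper's proof (as your final remark) and repairs its imprecision: strictly speaking, for $\ket{\Psi_{h',G',K,f,m}}$ to be a hash function ``working on $G'$'' in the sense of (\ref{eq:qhf-gr}), $K$ must consist of automorphisms of $G'$, a point the paper glosses over and you settle. What the paper's terser route buys is brevity and no assumption about the nature of the automorphisms in $K$; what yours buys is a construction that literally meets the definition and an explicit account of where each hypothesis is used.
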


\begin{proof}
We can define $h'$ to be a restriction of $h$ on $G'$.

Then $\ket{\Psi_{h',G',K,f,m}}$ is a quantum hash function.

Let us consider square of scalar product of quantum hash function values on different inputs.

\[
  \left| \braket{\Psi_{h',G',K,f,m}(x)}{\Psi_{h',G',K,f,m}(x)} \right|^2 = \left| \braket{\Psi_{h,G,K,f,m(x)}}{\Psi_{h,G,K,f,m}(x)} \right|^2 < \epsilon
\]

We use that $G' \lhd G$ and that $h'$ is a restriction of $h$ on $G'$.
\end{proof}

Of course, such way is inefficient for small finite subgroups of $S_n$, but it works for non-abelian groups.

\bigskip

We can use our approach to construct quantum hash functions based on classical hash functions in $\NC^1$.

Let $h$ be a hash function that can be computed by $\NC^1$ circuit. We can now use theorem \ref{thm:symm-qhf} to obtain a quantum hash function based on it as follows.

We can convert circuit to width--5 polynomial--size branching program and represent it as permutation branching program \cite{Bar89}. Then we compute quantum hash function based on $h$ as follows. For each input symbol we simultaneously apply required permutation in all subspaces (under different automorphisms as described in theorem \ref{thm:symm-qhf}).

\end{document}